\title{\Large Directed Random Walks on Polytopes with Few Facets}
\author{\parbox{\textwidth}{\footnotesize\centering Malte Milatz\\ mmilatz@inf.ethz.ch\\ Department of Computer Science, ETH Zürich}}
\date{}
\begin{document}

    \maketitle
    \begin{abstract}
        \vspace{-2em}
        \noindent
        Let $P$ be a simple polytope with $n-d = 2$, where $d$ is the dimension
        and $n$ is the number of facets.
        The graph of such a polytope is also called a \emph{grid}.
        It is known that the directed random walk along the edges of $P$ terminates
        after $O(\log^2 n)$ steps, if the edges are oriented in a (pseudo-)linear
        fashion.
        We prove that the same bound holds for the more general unique sink
        orientations.
    \end{abstract}

\section{Introduction}

Our research is motivated by the simplex algorithm for linear programming.
We consider the variation where the algorithm chooses at each step the next
position uniformly at random from all improving neighbouring positions;
this rule is commonly called \textsc{Random-Edge}.
Its expected runtime on general linear programs can be mildly exponential;
cf.~\cite{FriedmannHZ'11}.
Better bounds can be hoped for if one imposes restrictions on the input.
It is intuitively plausible that \textsc{Random-Edge} should run very fast if the
number of constraints (or facets) is very small in relation to the dimension.

\cite{GaertnerSTVW'01} analyzed the performance of \textsc{Random-Edge}
on simple polytopes with $n$ facets in dimension $d = n-2$, and obtained the
tight bound $O(\log^2 n)$.
It is natural to ask inhowfar this bound depends on the geometry of the
problem.
To this end we consider the setting where the notion of `improving' is
specified not by a linear objective function, but by a \emph{unique sink
orientation}, which is a more general object with a simple combinatorial
definition.

Unique sink orientations have been studied in numerous contexts; see
e.g.~\cite{SzaboW'01,GaertnerS'06}, and \cite{GaertnerMR'08}.
They are defined as follows.
A \emph{sink} in a directed graph is a vertex without any outgoing edges.
Now, an orientation of the edges of a polytope is a \emph{unique sink
orientation} if every non-empty face of the polytope has a unique sink.
The definition is motivated by the fact that every linear orientation (the
orientation obtained from a generic linear objective function)
is a unique sink orientation (but the converse does not hold).

\begin{figure}
    \begin{center}
        \includegraphics[width=10em]{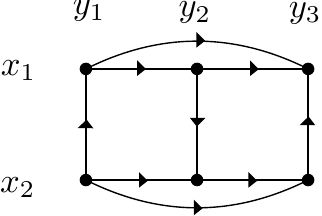}
    \end{center}
    \caption{The graph of a polytope (a prism)
        with $n=5$ facets and dimension $d=3$,
        which is a grid.
        Every vertex is identified by a pair $\{x_i,y_j\}$.
        The arrows give an example of a unique sink orientation.}
    \label{Figure: grid}
\end{figure}

The purpose of this note is to prove the following theorem.

\begin{theorem}
    \label{Theorem: upper bound}
    Let $n-d=2$.
    Let $P$ be a simple $d$-dimensional polytope with $n$ facets,
    endowed with a unique sink orientation.
    A directed random walk on $P$, starting at an arbitrary vertex,
    arrives at the sink after an expected number of $O(\log^2 n)$ steps.
\end{theorem}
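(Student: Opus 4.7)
The polytope $P$ is combinatorially a product of two simplices $\Delta^{k-1} \times \Delta^{\ell-1}$ with $k + \ell = n$, so I label vertices by pairs $(i,j) \in [k] \times [\ell]$; edges flip exactly one coordinate. Two structural facts will be used throughout. First, the restriction of the USO to any row or any column is a USO of a simplex, hence an acyclic tournament, giving total orders $\sigma_i$ on each row $i$ and $\tau_j$ on each column $j$, with rank $1$ the respective local sink. Second, every $2 \times 2$ sub-face is itself a USO, which imposes a local compatibility constraint on the orders $\sigma_i, \sigma_{i'}, \tau_j, \tau_{j'}$ involved: for each such sub-square, a short case analysis shows that exactly one of the four possible orientations of the two "cross'' edges, given a fixed orientation of the other two, is excluded (namely the one that would create two sinks in the square).

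The plan is to adapt \cite{GaertnerSTVW'01}, which handles linear orientations, to this USO setting. One step of \textsc{Random-Edge} at $(i,j)$ decomposes as a two-stage choice: first choose a row-step (with probability $R/(R+C)$) or a column-step (with probability $C/(R+C)$), where $R = \sigma_i(j)-1$ and $C = \tau_j(i)-1$ count the improving neighbors of each type; then pick a uniformly random improving neighbor of the chosen type. Conditional on a row-step, the walk behaves exactly like \textsc{Random-Edge} on the simplex of the current row: the row-rank drops from $r$ to a uniform element of $\{1,\dots,r-1\}$. This is a coupon-collector-style chain that would reach the row-sink in $O(\log r)$ expected steps if one only took row-steps, and symmetrically for columns.

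The heart of the proof should then be a potential-function argument with
\[
\Phi(i,j) \;=\; \log \sigma_i(j) + \log \tau_j(i),
\]
which is $O(\log n)$ initially and $0$ at the global sink $(1,1)$ (WLOG). One row-step replaces $\log \sigma_i(j)$ by $\log U$ with $U$ uniform in $\{1,\dots,\sigma_i(j)-1\}$, decreasing this term by a constant in expectation; the second summand, however, refers to the new column $j'$, and there is no a priori reason that $\tau_{j'}(i) \leq \tau_j(i)$. Controlling this column-rank change is the main obstacle: for linear orientations it is trivial because all columns share one order, but in a general USO different columns can carry very different orders. Here the forbidden $2 \times 2$ configuration has to do serious work, ruling out the worst "antagonistic'' cases and forcing that, on average, the new column-rank is not much larger than the old one.

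If I can show that $\Phi$ has a constant negative drift per step—or, failing that, establish a round-based statement that $\Phi$ halves with constant probability every $O(\log n)$ steps—then since $\Phi$ starts at $O(\log n)$ and must reach $0$, the total expected number of steps is $O(\log^2 n)$. If the drift/potential approach proves too fragile, a fallback is to count row-steps and column-steps separately, bounding the expected number of row-steps taken before the row-rank halves by $O(\log \ell)$ and allowing $O(\log n)$ such epochs triggered by column changes; the analogous bound for column-steps then gives the same $O(\log^2 n)$ total.
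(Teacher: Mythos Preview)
Your proposal is a plan, not a proof: you correctly isolate the central difficulty---after a row-step the column-rank $\tau_{j'}(i)$ is governed by a \emph{different} order and can jump up---but you do not resolve it. Saying that ``the forbidden $2\times 2$ configuration has to do serious work'' is a hope, not an argument; the $2\times 2$ USO condition constrains pairs of column orders only very weakly, and it is far from clear that it forces the expected increase of $\log\tau_{j'}(i)$ to be bounded by a constant. Your fallback has the same gap: ``row-rank halves'' is not a monotone event once column-steps can reset the relevant row order. In the linear/Holt--Klee case the cross-rank is invariant under a step of the other type, which is exactly why the Gärtner et al.\ argument goes through there; for general USOs this invariance fails, and that is precisely the place where a new idea is required.

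The paper does not attempt to control a rank-based potential. Instead it fixes a ladder of \emph{milestone vertices} $w^i$ (the unique vertices of refined out-degree $(2^{i-1},2^{i-1})$) and tracks when the walk enters the downset $W^i$ of $w^i$. Progress from $W^{i+1}$ to $W^i$ is bounded by $O(\log n)$ via a three-event argument: the pivots chosen at the stopping times $\tau_1,\tau_2,\tau_3$ (the times when an element of $\Phi(w^{i+1})\cup\Phi(w^i)$ is pivoted) each land in a ``good'' subset with probability $\ge 1/5$, and three consecutive successes force $v_{\tau_3}\in W^i$. The crucial structural inputs are global, not local $2\times2$ facts: acyclicity of grid USOs, the reachability dichotomy ``$w\not\to^+ v$ implies $\Phi_X(v)\supseteq\Phi_X(w)$ or $\Phi_Y(v)\supseteq\Phi_Y(w)$'', and the bound $|\Phi(w^{i+1})\cup\Phi(w^i)|\le 5\cdot 2^{i-1}$. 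None of these appear in your outline, and without something of comparable strength the drift of your potential $\Phi$ cannot be controlled.
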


\vspace{-1em}
To be perfectly clear, by a \emph{directed random walk} on a given directed
graph we mean the following process: We begin in some given vertex $v_0$; we
choose one edge uniformly at random from the set of outgoing edges at $v_0$;
we move to the other endpoint and call it $v_1$; then we continue in the same
fashion until we possibly arrive at a sink.

For the more restrictive \emph{Holt-Klee} (or \emph{pseudo-linear})
orientations, the bound in \cref{Theorem: upper bound} has been proved by
\cite{Tschirschnitz'03}.
The general structure of our proof is very similar to Tschirschnitz'; the only
notable deviation will be the proof of \cref{Lemma: hitting the next
milestone}.


In order to fix some notation, let $H$ be the set of the $n$ halfspaces that
define the polytope $P$, and let $V$ be the set of vertices.
We will identify every vertex $v \in V$ with the set
\[
    \{ h \in H \,:\, v \text{ lies in the interior of } h \},
\]
so that $V$ becomes a subset of the powerset of $H$.

The following fact appeared as lemma 2.1 in \cite{FelsnerGT'05}, and may
also be seen as a consequence of the existence of Gale diagrams:
Assuming that $P$ is a simple polytope with $n-d=2$,
there exists a partition $H = X \dot\cup Y$ such that,
under the identification of vertices with subsets of $H$ described
above, we have
\[
    V = \left\{ \{x,y\} ~:~ x \in X,~ y \in Y \right\}.
\]
Practically speaking, we can thus refer to each vertex by its $X$-coordinate
and $Y$-coordinate.
Furthermore, a set of vertices forms a face if and only if it is of the form
$V \cap 2^{H'}$ for some $H' \subseteq H$.
In particular, two vertices $v,v'$ are adjacent if and only if the sets
$v,v'$ are not disjoint.

The graph of the polytope $P$ is thus isomorphic to the product of two
complete graphs, as in \cref{Figure: grid}.
Such a graph is also called a \emph{grid}.
Those readers who are not used to unique sink orientations might at this point
want to check that the orientation shown in the figure is indeed a unique sink
orientation.
Note in particular that every row or column in fig.~\ref{Figure: grid} is also
a face of $P$, the edges of such a face constitute a complete graph, and they
must be oriented in an acyclic fashion.

We write $u \to v$ for a directed edge from a vertex $u$ to a vertex $v$.
A non-empty directed path from $u$ to $v$ is denoted by $u \to^+ v$.
The \emph{outmap} $\Phi : V \to 2^H$ specifies the outgoing edges at each
vertex and is defined by
    \begin{align*}
        \Phi(v) :=
        \bigl\{ h \in H \setminus v \,:\,
            v \to w \text{ for some } w \in V \text{ with } h \in w
        \bigr\}.
    \end{align*}
We also abbreviate
\[
    \Phi_X(v) := \Phi(v) \cap X \text{ and } \Phi_Y(v) := \Phi(v) \cap Y.
\]
For example, for the top middle vertex of the grid pictured in \cref{Figure:
grid}, we have $\Phi_X(\{x_1,y_2\}) = \{x_2\}$ and
$\Phi_Y(\{x_1,y_2\}) = \{y_3\}$.

\section{Milestones}

Given a vertex $v$, the pair $(|\Phi_X(v)|,|\Phi_Y(v)|)$ is known as the
\emph{refined out-degree}.
From \cite{FelsnerGT'05}, Lemma 3.1, we know that for every pair of indices
$(i,j)$ with $0 \le i < |X| - 1$ and $0 \le j < |Y| - 1$ there exists a unique
vertex with refined out-degree $(i,j)$.
We use this property to define the following `milestones' for our random walk.

The number of milestones will be
\[ L := 1 + \left\lfloor \log_2 \del{ \min \cbr{ |X|-1, |Y|-1 }  } \right\rfloor . \]
For all $i \in \{ 1 ~\dots~ L \}$, let $w^i$ denote the unique vertex of $P$ with
refined out-degree $(2^{i-1},2^{i-1})$.
(The indices are chosen in such a way that the vertex $w^i$ exists and
has exactly $2^i$ outgoing edges.)
Furthermore we define $w^0$ as the unique sink of $P$; in other words, $w^0$
is the unique vertex with refined out-degree $(0,0)$.
Now define $W^i$ as the set of vertices to which there is a
non-empty directed path from this vertex, i.\,e.,
\[ W^i := \{\, v \in V \,:\, w^i \to^+ v \,\}. \]
The sets $W^i$ serve as a measure of progress:
Starting from a vertex in the set $W^{i+1}$,
the next `milestone' is hit when the random walk arrives for the first time in
a vertex of $W^i$;
and once the random walk arrives in some $W^i$, it stays therein.
Note that the indices are counting down:
The random walk arrives in the global sink as soon as the milestone
$W^0 = \{w^0\}$ is hit.
Our goal is now to prove the following propositions.
\begin{proposition}
    \label{Proposition: next milestone}
    The expected time until the random walk,
    starting from a vertex in $W^{i+1}$, arrives in $W^i$,
    is bounded by $O(\log n)$. ($i = 0,\dots,L-1$.)
\end{proposition}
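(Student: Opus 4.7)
The plan is to analyse the walk by a potential function argument, leveraging two basic structural inputs. First, each $W^i$ is forward-closed: if $v \in W^i$ and $v \to v'$ then $w^i \to^+ v \to v'$, so $v' \in W^i$. Hence the walk stays in $W^{i+1} \setminus W^i$ up to the hitting time we want to bound. Second, every row $\{x\} \times Y$ and every column $X \times \{y\}$ is a face of $P$, so the induced USO is an acyclic orientation of a complete graph and therefore a transitive tournament. In particular, as $v'$ ranges over the $|\Phi_X(v)|$ possible $X$-neighbours of $v$, the value $|\Phi_X(v')|$ takes each of $0,1,\dots,|\Phi_X(v)|-1$ exactly once; analogously for $Y$-edges.

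I would then track the potential $\Psi(v) := \Harm_{|\Phi_X(v)|} + \Harm_{|\Phi_Y(v)|}$, where $\Harm_k$ is the $k$-th harmonic number. Using the classical identity $\frac{1}{k}\sum_{j=0}^{k-1}\Harm_j = \Harm_k - 1$, the expected value of the summand along the direction of the step decreases by exactly $1$. The walk chooses an $X$-edge with probability $|\Phi_X(v)|/|\Phi(v)|$ and a $Y$-edge with probability $|\Phi_Y(v)|/|\Phi(v)|$, so in a single step the drift of $\Psi$ is $-1$ plus the weighted average of the changes in the orthogonal summand. Since $\Psi(v) \le 2\Harm_n = O(\log n)$ throughout the grid, a drift of $-\Omega(1)$ yields the desired $O(\log n)$ expected hitting time via a standard optional-stopping argument applied at the entry time into $W^i$.

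The decisive step, and precisely the place where the argument must diverge from Tschirschnitz', is to bound the drift contribution of the orthogonal summand by a constant strictly smaller than $1$. Concretely, the key lemma I would need to prove is that for every $v \in W^{i+1} \setminus W^i$ the average of $\Harm_{|\Phi_Y(v')|}$, taken uniformly over the $|\Phi_X(v)|$ possible $X$-neighbours $v'$, exceeds $\Harm_{|\Phi_Y(v)|}$ by at most some absolute constant $C < 1$, and symmetrically for $Y$-edges. Tschirschnitz derives his analogue of this lemma from the Holt-Klee multi-path property, which is unavailable for general USOs; the main obstacle is to find a substitute purely in terms of the USO axioms on the grid, most likely by exploiting the forward-closure of the $W^j$'s together with the bijection between refined out-degrees and vertices recalled at the start of Section~2.
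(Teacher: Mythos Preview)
Your plan has a genuine gap precisely where you locate it, and I do not see how to close it along the lines you sketch. The lemma you need---that for every $v\in W^{i+1}\setminus W^i$ the expected increase of $\Harm_{|\Phi_Y(\cdot)|}$ across a uniformly random $X$-step from $v$ is at most some absolute $C<1$---does not follow from forward-closure of the $W^j$ or from the refined out-degree bijection in any way I can see. Membership in $W^{i+1}\setminus W^i$ forces, via Lemma~\ref{Lemma: list}(\ref{Item: pivot containment}), only that one of $|\Phi_X(v)|,|\Phi_Y(v)|$ is at least $2^{i-1}$; it places no evident constraint on the $|\Phi_Y|$-values of the $X$-neighbours of $v$, which can in principle be as large as $|Y|-1$ even when $|\Phi_Y(v)|$ is small. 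So the orthogonal drift at a single vertex may well be of order $\log n$, and the supermartingale does not materialise. (A side remark: Tschirschnitz does not use a harmonic potential either; his proof already has the milestone/stopping-time skeleton of the present paper and differs only inside the ``hitting the next milestone'' lemma, so there is no Holt--Klee analogue of your lemma to replace.)

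The paper's route sidesteps out-degrees entirely. It records only the instants $\tau_1<\tau_2<\dots$ at which the walk pivots on an element of the fixed set $\Phi(w^{i+1})\cup\Phi(w^i)$, whose size is at most $5\cdot 2^{i-1}$ by Lemma~\ref{Lemma: list}(\ref{Item: union 5}). Between two consecutive such instants all pivots lie in the same coordinate class (Lemma~\ref{Lemma: adjacency}), so each gap $\tau_j-\tau_{j-1}$ has expectation at most $\Harm_n+1$ by the one-dimensional bound (Lemma~\ref{Lemma: log n}). The substantive combinatorics is then local to the $\tau_j$: using Lemma~\ref{Lemma: list}(\ref{Item: pivot containment}) one shows that at each such instant at least a $\tfrac15$ fraction of admissible pivots lie in the appropriate half of $\Phi(w^i)$, and that three consecutive good choices force $v_{\tau_3}\subseteq\Phi(w^i)$, hence $v_{\tau_3}\in W^i$ by Lemma~\ref{Lemma: list}(\ref{Item: constraint containment}). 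This gives $\E N\le 155$ for the number of distinguished instants before entering $W^i$, and thus $\E T\le 155(\Harm_n+1)$. No global drift estimate is ever needed; only a bounded number of specific steps are controlled.
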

\begin{proposition}
    \label{Proposition: first milestone}
    The expected time until the random walk,
    starting from an arbitrary position, arrives in $W^L$,
    is also bounded by $O(\log n)$.
\end{proposition}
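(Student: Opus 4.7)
\emph{Plan.} My plan is to follow the approach of \cite{Tschirschnitz'03}, since the excerpt indicates that only the proof of the next-milestone step requires essentially new ideas for general USOs. The key structural feature to exploit is that $w^L$, having refined out-degree $(2^{L-1},2^{L-1})$ with $2^{L-1} \geq \min(|X|-1,|Y|-1)/2$, makes $W^L$ a \emph{fat} target: writing $w^L = \{x^*, y^*\}$, the set $W^L$ contains at least $2^{L-1}$ vertices in the row of $y^*$ and at least $2^{L-1}$ vertices in the column of $x^*$, in addition to the global sink $w^0$, which is reachable from $w^L$ and therefore lies in $W^L$.

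To bound the expected hitting time of $W^L$, I would introduce a potential function
\[
    \phi(v) := \bigl| \{ u \in V : v \to^+ u \} \setminus W^L \bigr|,
\]
counting how many vertices outside $W^L$ remain reachable from $v$, and attempt to establish a geometric decrease
\[
    \E[\phi(v_{t+1}) \mid v_t = v] \leq (1-c)\, \phi(v)
\]
for some absolute constant $c > 0$ whenever $v \notin W^L$. The intuition is that from $v = \{x,y\}$ the random walk picks an outgoing edge uniformly at random from its row (the face fixing $y$) and column (the face fixing $x$), both of which are acyclic tournaments since they are faces of $P$. A constant fraction of these outgoing edges therefore either land directly in $W^L$ (by the fatness noted above) or halve the number of remaining reachable vertices along one coordinate, by the classical halving phenomenon underlying the $O(\log n)$ expected termination of Random-Edge on a single tournament. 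Summing over the two directions yields the desired expected decrease of $\phi$.

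Granted this decrease, the initial bound $\phi(v_0) \leq |V| \leq n^2$ together with a standard Markov-type argument gives an expected hitting time of $O(\log n)$. The main obstacle lies in establishing the expected decrease: the set $\{u : v \to^+ u\}$ need not be an antichain in any global acyclic order, since USOs on grids can in principle be cyclic, so Tschirschnitz's linear arguments do not port over verbatim. I would address this by carrying out the analysis row by row and column by column (each being an acyclic tournament) and invoking the uniqueness of vertices with prescribed refined out-degree from \cite{FelsnerGT'05} to control the interaction between the two directions.
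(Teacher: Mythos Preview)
Your approach differs from the paper's. The paper does not spell out a separate proof of \cref{Proposition: first milestone}; it simply remarks (at the end of \cref{Section: proof}) that ``a very similar argumentation'' to the one for \cref{Proposition: next milestone} applies. In other words, the intended proof reuses the stopping-time machinery of \cref{Lemma: hitting the next milestone}: one watches for pivots in (a suitable analogue of) $\Phi(w^{i+1}) \cup \Phi(w^i)$, bounds the waiting time between consecutive such pivots by $\Harm_n+1$ via \cref{Lemma: log n}, and shows via \cref{Lemma: list}(\ref{Item: pivot containment}) and (\ref{Item: constraint containment}) that a bounded expected number of such pivots lands the walk in $W^L$. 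Your potential-function route is genuinely different.

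There is also a factual slip: you write that ``USOs on grids can in principle be cyclic'', but \cref{Lemma: list}(\ref{Item: acyclic}) states exactly the opposite. Grid USOs are always acyclic, so the obstacle you flag is not there, and a global topological order is available.

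More seriously, the heart of your plan --- the inequality $\E[\phi(v_{t+1}) \mid v_t=v] \le (1-c)\,\phi(v)$ --- is asserted on intuition only, and I do not see that it follows from what you sketch. The halving phenomenon you invoke holds for the out-degree \emph{within a single row or column}: stepping along the row of $y$ from $\{x,y\}$ to a random $\{x',y\}$ roughly halves the row out-degree. But $\phi$ counts reachable vertices in the \emph{entire grid}, and a step in the $X$-direction puts you into a new column whose internal orientation is unrelated to the old one; the number of vertices reachable outside $W^L$ need not drop by any fixed factor. Nor does the ``fatness'' of $W^L$ give a constant hitting probability per step, since its fat part sits in the row of $y^*$ and the column of $x^*$, and a generic $v$ has no constant chance of stepping into either. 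As written, the proposal identifies a reasonable target lemma but leaves it open; the paper's stopping-time argument sidesteps this difficulty entirely by never attempting a multiplicative potential decrease.
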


\vspace{-2ex}
The bound $O(\log^2 n)$ in \cref{Theorem: upper bound} will follow
by observing that there are $L+1 = O(\log n)$ many
milestones, each of which is hit -- according to the two propositions -- after
at most $O(\log n)$ steps in expectation.

The technical statements of the following lemma will be useful for deducing
the orientation of some edges incident to the current position.
The statements are known or follow readily from known results.\footnote{In
particular, parts (a) and (b) of the lemma appear as lemma 4.6 in
\cite{Tschirschnitz'03}; part (c) is lemma 4.8 in \cite{Tschirschnitz'03};
part (d) can be proved using (c); and part (e) is from \cite{GaertnerMR'08}.}

\begin{lemma}
    \reallynopagebreak
    \label{Lemma: list}
\begin{enumerate}[(a)]
    \item 
        \label{Item: constraint containment}
        If a vertex $v$ satisfies $v \subseteq \Phi(w^i)$, then $v \in W^i$.
    \item
        \label{Item: constraint containment 2}
        For all $v \in W^i$, $v \cap \Phi(w^i) \neq \emptyset$.
    \item
        \label{Item: pivot containment}
        Let $v,w \in V$ and assume $w \not\to^+ v$.
        Then either $\Phi_X(v) \supseteq \Phi_X(w)$ or
        $\Phi_Y(v) \supseteq \Phi_Y(w)$ holds.
    \item
        \label{Item: union 5}
        $
            |\Phi(w^{i+1}) \cup \Phi(w^i)| \,\le\, 5 \cdot 2^{i-1}.
        $
    \item
        \label{Item: acyclic}
        Every unique sink orientation of $P$ is acyclic.
\end{enumerate}
\end{lemma}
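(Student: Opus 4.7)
The plan is to dispatch the five parts in turn. Parts (c) and (e) are known structural facts about USOs on grids which I will simply invoke from the cited references; the real work is in (a), (b), and (d), and each of these reduces to an analysis of the at-most-$2$-dimensional face spanned by $w^i$ and a second vertex, combined with acyclicity (e).

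First I would prove (a). Writing $w^i = \{x_0,y_0\}$ and $v = \{x,y\} \subseteq \Phi(w^i)$, one has $v \cap w^i = \emptyset$, so $H' := w^i \cup v$ spans a $2$-face $F$ whose four vertices are $\{x_0,y_0\}, \{x_0,y\}, \{x,y_0\}, \{x,y\}$. Both edges incident to $w^i$ inside $F$ are outgoing (since $x,y \in \Phi(w^i)$), so $w^i$ is the source of the USO on $F$; in a $4$-cycle USO the source and sink are antipodal, so the sink is $v$, and either length-$2$ path through $F$ witnesses $w^i \to^+ v$, yielding $v \in W^i$. Part (b) is the companion argument, by contradiction: assuming $v \in W^i$ and $v \cap \Phi(w^i) = \emptyset$, I would distinguish three cases. (i) If $v = w^i$, then $w^i \to^+ w^i$ is already a cycle, contradicting (e). (ii) If $|v \cap w^i| = 1$, the edge between $v$ and $w^i$ cannot exit $w^i$ (else the element of $v \setminus w^i$ would lie in $\Phi(w^i)$), so it enters $w^i$, closing a cycle with $w^i \to^+ v$. (iii) If $v \cap w^i = \emptyset$, both edges at $w^i$ inside the spanned $4$-cycle face are incoming, so $w^i$ is the sink of that face, forcing $v \to^+ w^i$ inside $F$ and again producing a cycle. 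All three cases contradict (e).

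For (d) I would apply (c) to $w^i$ and $w^{i+1}$. The relation $w^{i+1} \not\to^+ w^i$ is impossible: (c) would then force $\Phi_X(w^i) \supseteq \Phi_X(w^{i+1})$ or $\Phi_Y(w^i) \supseteq \Phi_Y(w^{i+1})$, but by construction $|\Phi_X(w^i)| < |\Phi_X(w^{i+1})|$ and $|\Phi_Y(w^i)| < |\Phi_Y(w^{i+1})|$. Using (e) we conclude $w^i \not\to^+ w^{i+1}$, so (c) applied in that direction gives, WLOG, $\Phi_X(w^{i+1}) \supseteq \Phi_X(w^i)$, whence
\[
|\Phi(w^{i+1}) \cup \Phi(w^i)| \le |\Phi_X(w^{i+1})| + |\Phi_Y(w^{i+1})| + |\Phi_Y(w^i)| \le 2^i + 2^i + 2^{i-1} = 5\cdot 2^{i-1}.
\]

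The main obstacle in the lemma as a whole is really concentrated in (c), which carries the non-trivial structural content and which I am content to take from \cite{Tschirschnitz'03}; once (c) and the acyclicity (e) from \cite{GaertnerMR'08} are in hand, the remaining parts are short case analyses on $2$-faces of the form above.
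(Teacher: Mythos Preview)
Your overall plan matches the paper's treatment exactly: the paper gives no proof of this lemma beyond a footnote saying (a), (b) are lemma~4.6 and (c) is lemma~4.8 in \cite{Tschirschnitz'03}, (e) is from \cite{GaertnerMR'08}, and (d) ``can be proved using (c)''. Your derivation of (d) from (c) and (e) is precisely the intended one, and your direct arguments for (a) and (b) via the $2\times2$ subface are the natural way to recover the cited lemma.

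There is, however, a genuine slip in your proof of (a). You assert that ``in a $4$-cycle USO the source and sink are antipodal'', and use this to conclude that $v$ is the sink of $F$. This is false: on the $2\times2$ grid with vertices $\{x_0,y_0\},\{x_0,y_1\},\{x_1,y_0\},\{x_1,y_1\}$, the orientation
\[
\{x_0,y_0\}\to\{x_0,y_1\},\quad \{x_0,y_0\}\to\{x_1,y_0\},\quad \{x_1,y_0\}\to\{x_1,y_1\},\quad \{x_1,y_1\}\to\{x_0,y_1\}
\]
is a valid USO (refined out-degrees $(1,1),(0,0),(0,1),(1,0)$ all distinct) whose source $\{x_0,y_0\}$ and sink $\{x_0,y_1\}$ are adjacent, not antipodal. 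Consequently your claim that ``either length-$2$ path through $F$ witnesses $w^i\to^+ v$'' is also not justified as stated.

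The repair is immediate and stays entirely within your framework: with $w^i$ the source of $F$, if both edges at $v$ were outgoing then $v$ would be a second source, forcing both intermediate vertices $\{x,y_0\}$ and $\{x_0,y\}$ to have in-degree~$2$ and hence both be sinks of $F$, contradicting uniqueness. So at least one edge at $v$ is incoming, and the corresponding length-$2$ path from $w^i$ reaches $v$. With this correction your proofs of (a), (b), (d) are complete.
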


\section{Proof of \cref{Proposition: next milestone}}
\label{Section: proof}

We write $v_0,v_1,\dots$ for the positions of the random walk,
where we consider the starting position $v_0$ as a fixed element of $V$
(not a random variable).
Let $i$ be chosen such that $v_0 \in W^{i+1}$, and let $T$ denote the hitting
time of the set $W^i$:
\[
    T \,=\, \min \bigl\{ k ~:~ v_k \in W^i \bigr\}.
\]
We want to bound the expected time until the random walk arrives in $W^i$; in
other words, we want to bound $\E T$.

One way to look at the directed random walk is as follows:
At time $k+1$ it picks a \emph{pivot} $h_{k+1}$ uniformly at random from the
set $\Phi(v_k)$.
This pivot determines the edge along which to move away from the vertex $v_k$.
Concretely, the next position $v_{k+1}$ is the unique neighbour of $v_k$ that
satisfies $v_{k+1} \subseteq v_k \cup \{ h_{k+1} \}$.
Note that the pivot $h_{k+1}$ is only defined in this way when the position
$v_k$ is not already the global sink; so if $v_k = w_0$ is the global sink
then we let $h_{k+1} = \Diamond$, where $\Diamond$ is just a formal symbol
to remind us that the random walk has already terminated.

We define some auxiliary stopping times.
Let $\sigma$ denote the first time that an element of the set $\Phi(w^{i+1})$
is pivoted.
Furthermore let $\tau_1 < \tau_2 < \dots$ be the instants in time when an
element of the set $\Phi(w^{i+1}) \cup \Phi(w^i)$ is pivoted,
and let $\tau_N$ be the first among these instants when the random walk
hits the set $W^i$.

More precisely, we let
\begin{align*}
    \sigma ~&:=~ \min \bigl\{\, k > 0 ~:~ h_k \in \Phi(w^{i+1}) \text{ or }
    h_k = \Diamond \,\bigr\},
    \\
    \tau_0 ~&:=~ 0,
    \\
    \tau_{j} ~&:=~
        \min \bigl\{\, k > \tau_{j-1} ~:~ h_k \in \Phi(w^{i+1}) \cup \Phi(w^i)
        \text{ or } h_k = \Diamond \,\bigr\}
        \qquad
         (j \ge 1),
    \\
    N ~&:=~ \min \bigl\{\, j ~:~ v_{\tau_j} \in W^i \,\bigr\}.
\end{align*}
We suppress the dependence on $i$ in the notation, considering $i$ (the index
of the next milestone) fixed throughout the section.

\begin{lemma}
    \label{Lemma: adjacency}
    The random set $\{ h_1, \dots, h_{\sigma - 1} \}$ is always either a
    subset of $X$ or a subset of $Y$.
    As a consequence, the vertices $v_0, \dots, v_{\sigma - 1}$ share either
    their $Y$-coordinate or their $X$-coordinate.
\end{lemma}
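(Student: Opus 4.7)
The plan is to reduce the claim to a single local observation about two successive pivots, after which the full statement follows by a trivial induction.

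The first step I would carry out is to record the standing invariant $v_k \in W^{i+1}$ for all $k \ge 0$. This is immediate from the definition of $W^{i+1}$: if $u \in W^{i+1}$ and $u \to u'$, then the non-empty directed path $w^{i+1} \to^+ u$ can be extended to $w^{i+1} \to^+ u'$, so $u' \in W^{i+1}$; and $v_0 \in W^{i+1}$ by assumption. Part~(b) of \cref{Lemma: list} then gives $v_k \cap \Phi(w^{i+1}) \neq \emptyset$ for every $k$---that is, every $v_k$ has at least one of its two coordinates lying in $\Phi(w^{i+1})$.

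The main step is to prove that two consecutive pivots cannot cross the partition $H = X \dot\cup Y$. Suppose, toward contradiction, that for some index $k$ with $1 \le k$ and $k+1 \le \sigma - 1$ the pivots $h_k$ and $h_{k+1}$ lie in different parts; say $h_k \in X$ and $h_{k+1} \in Y$. Writing $v_{k-1} = \{x,y\}$, the $X$-pivot $h_k$ produces $v_k = \{h_k, y\}$, and the subsequent $Y$-pivot produces $v_{k+1} = \{h_k, h_{k+1}\}$. In other words, the two coordinates of $v_{k+1}$ are exactly the two pivots. But $k, k+1 < \sigma$ forces $h_k, h_{k+1} \notin \Phi(w^{i+1})$ by the very definition of $\sigma$, whence $v_{k+1} \cap \Phi(w^{i+1}) = \emptyset$, contradicting the invariant just recorded. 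The reverse case ($h_k \in Y$ and $h_{k+1} \in X$) is symmetric.

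Hence the pivots $h_1, \ldots, h_{\sigma - 1}$ all lie in a common part of $H$, which is the first assertion. If that part is $X$, then no pivot ever changes the $Y$-coordinate of the current vertex, so $v_0, \ldots, v_{\sigma - 1}$ share their $Y$-coordinate; the other case is symmetric, giving the second assertion. I do not anticipate any genuine obstacle here: the whole argument rests on part~(b) of \cref{Lemma: list}, on the definition of $\sigma$, and on the obvious closure of $W^{i+1}$ under the $\to$-relation.
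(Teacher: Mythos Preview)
Your proposal is correct and follows essentially the same approach as the paper: suppose two consecutive pivots before time $\sigma$ lie in different parts of $H$, observe that the resulting vertex has both coordinates outside $\Phi(w^{i+1})$, and derive a contradiction with \cref{Lemma: list}(\ref{Item: constraint containment 2}). Your write-up is in fact slightly more careful than the paper's, which argues with the specific indices $2$ and $3$ as an illustration and leaves the closure of $W^{i+1}$ under $\to$ implicit.
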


\begin{proof}
    Assume we encounter the event that, say, $h_2 \in X$ and $h_3 \in Y$,
    where $3 < \sigma$.
    Then $v_3 = \{ h_2, h_3 \}$.
    By definition of our stopping time $\sigma$, none of our pivots
    considered here are elements of $\Phi(w^{i+1})$; hence
    we have $v_3 \cap \Phi(w^{i+1}) = \emptyset$.
    On the other hand, by our choice of $i$ we have
    $v_3 \in W^{i+1}$
    and thus $v_3 \cap \Phi(w^{i+1}) \neq \emptyset$ by
    \cref{Lemma: list}(\ref{Item: constraint containment 2}); a contradiction.
\end{proof}

\begin{lemma} 
    \label{Lemma: log n}
    We have
    $
        \E \bigl[ \sigma \bigr] \,\le\, \Harm_n + 1
    $,
    where $\Harm_n$ denotes the $n$th harmonic number.
\end{lemma}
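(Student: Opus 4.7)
The key input is \cref{Lemma: adjacency}: for $k < \sigma$, the positions $v_0,\dots,v_{\sigma-1}$ lie in a single row or column of the grid. By the $X \leftrightarrow Y$ symmetry of the setup it suffices to bound $\E[\sigma \mid h_1 \in X]$ by $\Harm_n + 1$ (the case $h_1 \in Y$ being identical, and the case $h_1 = \Diamond$ giving $\sigma = 1$); conditional on $h_1 \in X$ the walk stays in the row $R \ni v_0$ up to time $\sigma$. Now $R$ is a face of $P$ on at most $n$ vertices, and since the restriction of any USO to a simplex face is a transitive tournament, from a rank-$r$ vertex of $R$ (where rank means $|\Phi_X(\cdot)|$) the outgoing edges within $R$ lead to exactly one vertex of each smaller rank $0,1,\dots,r-1$.

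Introduce the auxiliary stopping time
\[
    \tilde\sigma \,:=\, \min\bigl\{\,k \ge 1 \,:\, h_k = \Diamond \text{ or } h_k \in Y\,\bigr\},
\]
the first moment a $Y$-element (or $\Diamond$) is pivoted. Conditional on $h_1 \in X$, \cref{Lemma: adjacency} gives $\sigma \le \tilde\sigma$: any $Y$-pivot occurring while the preceding pivots have all been in $X$ must coincide with or postdate the pivot that first enters $\Phi(w^{i+1})$. So it suffices to bound $\E[\tilde\sigma \mid h_1 \in X]$.

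For each $r \ge 0$ let $E_r$ denote the supremum, over vertices $v$ of rank $r$ lying inside some row of the grid, of the expected time until the walk started at $v$ first pivots a $Y$-element or terminates. From such a $v$, writing $b := |\Phi_Y(v)|$, the next pivot is $Y$ with probability $b/(r+b)$ (giving value $1$); otherwise, by the transitive-tournament property just noted, the new rank is uniform on $\{0,1,\dots,r-1\}$. Thus $E_0 \le 1$ and, for $r \ge 1$,
\[
    E_r \,\le\, 1 + \frac{1}{r+b}\sum_{s=0}^{r-1} E_s \,\le\, 1 + \frac{1}{r}\sum_{s=0}^{r-1} E_s,
\]
the second inequality simply dropping $b \ge 0$. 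A routine induction using the identity $\sum_{s=0}^{r-1}\Harm_s = r\Harm_{r-1} - (r-1)$ then yields $E_r \le 1 + \Harm_r$, and hence
\[
    \E[\tilde\sigma \mid h_1 \in X] \,\le\, 1 + \frac{1}{r_0}\sum_{s=0}^{r_0-1} E_s \,\le\, 1 + \Harm_{r_0} \,\le\, \Harm_n + 1,
\]
where $r_0 := |\Phi_X(v_0)| \le |X| - 1 \le n$.

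I do not foresee a serious obstacle, since \cref{Lemma: adjacency} has already done the geometric heavy lifting. The only small subtleties to handle are the symmetric roles of $X$ and $Y$ (the first pivot determines which of row or column is the relevant face), and the fact that each step of the walk picks uniformly from all of $\Phi(v_k)$ rather than just $\Phi_X(v_k)$, which is why the recurrence for $E_r$ attains its worst case at $b = 0$.
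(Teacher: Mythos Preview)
Your proposal is correct and follows essentially the same route as the paper: use \cref{Lemma: adjacency} to confine the walk (up to time~$\sigma$) to a single row or column, and then bound the length of the induced walk on that face, which is a transitive tournament. The paper phrases the second step as a monotonicity argument (``the expectation only becomes larger if we condition on the walk staying in the row until it reaches the row sink, which takes $\Harm_n$ steps''), whereas you set up and solve the recurrence $E_r \le 1 + \frac{1}{r}\sum_{s<r} E_s$ directly; your version is in fact the more explicit of the two.
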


\begin{proof}
    By \cref{Lemma: adjacency}, either the pivots $h_0, \dots, h_{\sigma-1}$
    are all elements of $X$, or they are all elements of $Y$.
    Thus it suffices to bound the expected time until, say, the pivot is not
    an element of $Y$.
    In terms of \cref{Figure: grid}, this means to bound the expected time
    until the random walk leaves the current row of the grid.
    This expectation only becomes larger if we condition on the event that the
    random walk stays in the current row until it reaches the sink of the row.
    Reaching the sink of the row can be shown to take $\Harm_n$ steps in expectation;
    we still have to add $1$ for the possible additional step that leaves the
    current row.
\end{proof}
    

\begin{lemma}[Hitting the next milestone]
    \label{Lemma: hitting the next milestone}
    We have
    $
    \E T \,\le\, 155 (\Harm_n + 1).
    $
\end{lemma}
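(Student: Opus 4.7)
The plan is to bound the hitting time $T$ by grouping the steps of the walk into blocks delimited by the ``relevant'' times $\tau_1 < \tau_2 < \dots$, at which a pivot from $R := \Phi(w^{i+1}) \cup \Phi(w^i)$ (or the symbol $\Diamond$) is used. Since $v_{\tau_N} \in W^i$ by definition, $T \le \tau_N$, and therefore
\[
\E T \,\le\, \E \tau_N \,=\, \E \sum_{j=1}^{N}(\tau_j - \tau_{j-1}).
\]
Because $N$ is a stopping time with respect to the filtration $(\mathcal{F}_{\tau_j})_{j \ge 0}$, a standard optional-stopping computation bounds this sum by
$(\E N) \cdot \sup_{j \ge 1} \E[\,\tau_j - \tau_{j-1} \mid \mathcal{F}_{\tau_{j-1}},\, \tau_{j-1} < \tau_N\,]$.
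It then suffices to show that the supremum is at most $\Harm_n + 1$ and that $\E N$ is an absolute constant.

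For the per-block bound I will adapt \cref{Lemma: adjacency} and \cref{Lemma: log n}. Since $v_0 \in W^{i+1}$ and the walk never leaves $W^{i+1}$, \cref{Lemma: list}\,(\ref{Item: constraint containment 2}) forces every intermediate vertex to meet $\Phi(w^{i+1})$. Combined with the fact that by construction no $R$-pivot is used strictly between $\tau_{j-1}$ and $\tau_j$, the very same contradiction as in \cref{Lemma: adjacency} rules out a mixed $X$- and $Y$-pivot within a single block. Hence inside one block the walk stays in a single row or column of the grid, so the row-walk estimate of \cref{Lemma: log n} applies verbatim and yields an expected block length of at most $\Harm_n + 1$.

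The main obstacle is showing $\E N = O(1)$ with a small explicit constant. I plan to condition on $\mathcal{F}_{\tau_{j-1}}$ and lower-bound $\Pr[v_{\tau_j} \in W^i \mid \mathcal{F}_{\tau_{j-1}}]$ by a positive absolute constant. The structural ingredients will be \cref{Lemma: list}\,(\ref{Item: pivot containment}), which forces $\Phi_X(v) \supseteq \Phi_X(w^i)$ or $\Phi_Y(v) \supseteq \Phi_Y(w^i)$ for every $v \notin W^i$ and so produces at least $2^{i-1}$ candidate pivots on one coordinate side, together with \cref{Lemma: list}\,(\ref{Item: union 5}), which caps the total number of admissible $R$-pivots at $5 \cdot 2^{i-1}$. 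Combining these with \cref{Lemma: list}\,(\ref{Item: constraint containment})---which certifies $\{h, z\} \in W^i$ as soon as both $h, z \in \Phi(w^i)$---a case analysis on where the retained coordinate of $v_{\tau_j - 1}$ sits relative to $\Phi(w^i)$ should reveal that a constant fraction of the $R$-pivots available at $v_{\tau_j - 1}$ land the walk inside $W^i$. The resulting geometric-type tail on $N$ then gives the constant bound on $\E N$, and multiplying the two estimates yields $\E T \le 155(\Harm_n + 1)$.
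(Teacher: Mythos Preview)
Your decomposition $\E T \le \E\tau_N \le (\E N)\cdot(\Harm_n+1)$ and the per-block bound via \cref{Lemma: adjacency} and \cref{Lemma: log n} are exactly what the paper does. The gap is in your plan for bounding $\E N$.

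You propose to show a uniform lower bound on $\Pr[v_{\tau_j}\in W^i \mid \mathcal{F}_{\tau_{j-1}}]$ by a case analysis on the retained coordinate of $v_{\tau_j-1}$. But this single-step bound need not hold. Take $v_{\tau_j-1}=\{x,y\}\in W^{i+1}\setminus W^i$ with \emph{neither} coordinate in $\Phi(w^i)$; this is perfectly possible, since \cref{Lemma: list}(\ref{Item: constraint containment 2}) only forces one coordinate into $\Phi(w^{i+1})$, and $\Phi(w^{i+1})\setminus\Phi(w^i)$ can be nonempty. In that situation, any single pivot $h\in R$ produces a vertex $\{h,z\}$ whose retained coordinate $z\in\{x,y\}$ lies outside $\Phi(w^i)$, so \cref{Lemma: list}(\ref{Item: constraint containment}) is inapplicable and you have no mechanism to certify $\{h,z\}\in W^i$. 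Hence the one-step success probability can be zero and no geometric tail follows; in particular, the specific constant $155$ cannot be reached this way.

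The paper's remedy is a \emph{three}-step argument: event $\Ev_1$ places one coordinate into $\Phi(w^i)$; event $\Ev_2$ (using acyclicity, \cref{Lemma: list}(\ref{Item: acyclic}), to ensure the coordinate gained in $\Ev_1$ is not ``undone'' before $\tau_2$) forces the \emph{other} coordinate side to be available; and event $\Ev_3$ then lands in $W^i$ via \cref{Lemma: list}(\ref{Item: constraint containment}). Each conditional probability is at least $1/5$ by the $2^{i-1}$-vs-$5\cdot 2^{i-1}$ count you already identified. The bound $\E N\le 155$ then comes from the expected waiting time for three consecutive successes in a Bernoulli$(1/5)$ process, namely $(1-p^3)/(p^3(1-p))=155$. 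Your outline is missing precisely this multi-step build-up and the acyclicity argument that preserves progress between relevant times.
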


\begin{proof}
    We have $T \,\le\, \tau_N$; so we will concentrate our efforts on
    bounding the expectation of $\tau_N$.
    In the following we will need to make the starting position $v_0$ of the
    random walk explicit in the notation; we will do so by writing the
    starting position as a subscript, as in $\E_{v_0}[\_]$ or $\Pr_{v_0}[\_]$.

    Using the Markov property we find, for all $j \ge 1$,
    \begin{align*}
        \E_{v_0} \bigl[ \tau_j - \tau_{j-1} \given j \le N \bigr]
        \,&=\,
        \E_{v_0} \bigl[ \tau_j - \tau_{j-1} \given v_{\tau_{j-1}} \notin W^i \bigr]
        \\
        \,&=\,
        \sum\nolimits_{u \in W^{i+1} \setminus W^i } \Pr \bigl[ v_{\tau_{j-1}} = u \bigr]
        \E_{v_0} \bigl[ \tau_j - \tau_{j-1} \given v_{\tau_{j-1}} = u \bigr]
        \\
        \,&=\,
        \sum\nolimits_{u \in W^{i+1} \setminus W^i } \Pr\nolimits_{v_0} \bigl[ v_{\tau_{j-1}} = u \bigr]
        \E_u \bigl[ \tau_1 \bigr]
        \\
        \,&\le\,
        \sup\nolimits_{u \in W^{i+1}} \E_u \bigl[ \tau_1 \bigr]
        \\
        \,&\le\,
        \sup\nolimits_{u \in W^{i+1}} \E_u \bigl[ \sigma \bigr]
        \\
        \,&\le\,
        \Harm_n+1,
    \end{align*}
    where the last step used \cref{Lemma: log n}.
    Hence,
    applying \cref{Lemma: multiplicative bound} (appendix) to the sequence
    $(\tau_j - \tau_{j-1})_{1 \le j \le N}$,
    \begin{align*}
        \textstyle
        \E T
        \,\le\,
        \E \bigl[ \tau_N \bigr]
         \,=\,
        \E\bigl[\, \sum_{j=1}^N (\tau_j - \tau_{j-1}) \,\bigr]
         \,\le\,
        \E N \cdot (\Harm_n + 1).
    \end{align*}
    It remains to show that the number $\E N$ can
    be bounded from above by the constant $155$.
    To this end we consider the following events.
    \begin{align*}
        \Ev_1 ~&:~ v_{\tau_1} \in W^i \text{ or } h_{\tau_1} \in \Phi(w^i);
        \\
        \Ev_2 ~&:~ v_{\tau_2} \in W^i \text{ or  } h_{\tau_2} \in \Phi_\xi(w^i), \text{
        where } \xi \in \{X,Y\} \text{ such that } h_{\tau_1} \notin \xi;
        \\
        \Ev_3 ~&:~ v_{\tau_3} \in W^i.
    \end{align*}

    \textbf{Claims.}
    For any choice of starting position we have
            \reallynopagebreak
    \begin{enumerate}[\itshape (i)]
        \item
            $\Pr \bigl[ \Ev_1 \bigr] \ge \frac{1}{5}$.
            \reallynopagebreak

        \item
            $\Pr \bigl[ \Ev_2 \given \Ev_1 \bigr] \ge \frac{1}{5}$.
            \reallynopagebreak

        \item
            \reallynopagebreak
            $\Pr \bigl[ \Ev_3 \given \Ev_1, \Ev_2 \bigr] \ge \frac{1}{5}$.
    \end{enumerate}

    \vspace{1ex}
    Note that the event $\Ev_3$ is equivalent to the event $N \le 3$
    (or in other words, the event that the next milestone is hit no later than
    at time $\tau_3$).
    Thus, once these claims are proved, we can conclude
    that $\E N$ can be bounded from above by the expected number of steps
    that it takes a Bernoulli process with parameter $p = \frac{1}{5}$ to hit
    3 successive successes.
    By \cref{Theorem: Bernoulli process} in the appendix this number is, as
    desired,
    \[ \frac{1 - p^3}{p^3(1-p)} = 155. \]

    \textit{Proofs of the claims.}
    \reallynopagebreak
    \begin{enumerate}[\itshape (i)]
        \item
            In order to show \emph{(i)}, it suffices to show
            $\Pr \bigl[ \Ev_1 \given v_{\tau_1} \notin W^i \bigr] \ge \frac{1}{5}$.
            Let us thus assume $v_{\tau_1} \notin W^i$, which means that
            the random walk has not yet hit the next milestone at time $\tau_1$.
            In particular, the random walk has not yet terminated at time
            $\tau_1$.
            We want to show that now the event $h_{\tau_1} \in \Phi(w^i)$ happens
            with probability at least $\frac{1}{5}$.

            By definition of the random walk, the $k$th pivot $h_k$ is chosen
            uniformly at random from the set of violating constraints,
            $\Phi(v_{k - 1})$.
            This is true for any time $k$ at which the random walk has not yet
            terminated; now we consider $k=\tau_1$:
            By definition of $\tau_1$, the pivot $h_{\tau_1}$ is then
            chosen (still uniformly at random) only from the smaller set
            \begin{align*}
                S \,:=\, \Phi(v_{\tau_1 - 1}) \cap (\Phi(w^{i+1}) \cup \Phi(w^i)).
            \end{align*}
            Since the next milestone has not yet been hit, we know that
            $v_{\tau_1 - 1} \notin W^i$ holds, which is equivalent to writing
            $w^i \not\to^+ v_{\tau_1 - 1}$.
            Thus \cref{Lemma: list}(\ref{Item: pivot containment}) implies that
            either $\Phi_X(v_{\tau_1-1}) \supseteq \Phi_X(w^i)$ or
            $\Phi_Y(v_{\tau_1-1}) \supseteq \Phi_Y(w^i)$ holds.
            In both these cases we see that $S$ contains
            a subset of $\Phi(w^i)$ of size $2^{i-1}$.
            On the other hand,
            \[ |S| \le |\Phi(w^{i+1}) \cup \Phi(w^i)| \le 5 \cdot 2^{i-1} \]
            by \cref{Lemma: list}(\ref{Item: union 5}).
            Hence,
            \begin{align*}
                \Pr \bigl[ \Ev_1 \given v_{\tau_1} \notin W^i \bigr]
                \,=\,
                \Pr \bigl[ h_{\tau_1} \in \Phi(w^i) \given
                v_{\tau_1} \notin W^i \bigr] \,&\ge\, 2^{i-1} / (5 \cdot
                2^{i-1}) \,=\, \frac{1}{5}.
            \end{align*}

        \item
            Similarly to how we proceeded in the proof of \emph{(i)},
            it suffices to show $\Pr\bigl[ \Ev_2 \given \Ev_1 \text{ and }
            v_{\tau_2} \notin W^i \bigr] \ge \frac{1}{5}$.
            So we assume $\Ev_1$ and $v_{\tau_2} \notin W^i$, and without loss
            of generality we assume $h_{\tau_1} \in \Phi_X(w^i)$.
            We want to show that now $h_{\tau_2} \in \Phi_Y(w^i)$ happens with
            probability at least $\frac{1}{5}$.

            This time the pivot $h_{\tau_2}$ is chosen uniformly at random
            from the set
            \[
                S' \,:=\, \Phi(v_{\tau_2-1}) \cap (\Phi(w^{i+1}) \cup \Phi(w^i)).
            \]
            As before, $|S'| \le 5 \cdot 2^{i-1}$.
            Also as before, either $S' \supseteq \Phi_X(w^i)$ or $S' \supseteq
            \Phi_Y(w^i)$ must hold.
            Now, however, we can observe that the latter alternative must be
            true (implying the claim because $|\Phi_Y(w^i)| = 2^{i-1}$), as
            follows:

            It suffices to show that $h_{\tau_1} \notin \Phi(v_{\tau_2 - 1})$.
            (Indeed, this implies $S' \not\supseteq \Phi_X(w^i)$, which leaves us
            only with the other alternative, $S' \supseteq \Phi_Y(w^i)$.)
            We know from \cref{Lemma: adjacency} that the vertices
            $v_{\tau_1}$ and $v_{\tau_2 - 1}$ share either their
            $X$-coordinate or their $Y$-coordinate.
            If they share their $X$-coordinate, so that
            $h_{\tau_1} \in v_{\tau_2 - 1}$,
            then clearly $h_{\tau_1} \notin \Phi(v_{\tau_2 - 1})$.
            If on the other hand our two vertices share their $Y$-coordinate
            but not their $X$-coordinate,
            let us consider the grid edge between our two vertices:
            Since there is a walk from $v_{\tau_1}$ to $v_{\tau_2 - 1}$, the
            edge cannot be directed from $v_{\tau_2 - 1}$ to $v_{\tau_1}$,
            by acyclicity (\cref{Lemma: list}(\ref{Item: acyclic})).
            Note furthermore that we have $v_{\tau_1} = \{ h_{\tau_1}, y \}$,
            where $y$ is the shared $Y$-coordinate.
            Hence the non-existence of a directed edge from $v_{\tau_2 - 1}$
            to $v_{\tau_1}$ translates into saying that we have
            $h_{\tau_1} \notin \Phi(v_{\tau_2 - 1})$, as desired.


        \item
            We assume that $\Ev_1$ and $\Ev_2$ occur, and we can also assume
            $h_{\tau_1} \in \Phi_X(w^i)$, $h_{\tau_2} \in \Phi_Y(w^i)$,
            without loss of generality.
            We can also assume that the next milestone has not already been hit
            before the time $\tau_3$ (i.e., we assume $v_{\tau_3 - 1} \notin W^i$).
            Now we want to show that with probability at least $\frac{1}{5}$
            we have $v_{\tau_3} \in W^i$.

            Analogously to the proof of \emph{(ii)} we note that $h_{\tau_3}$ is
            chosen uniformly at random from a set of cardinality at most $5
            \cdot 2^{i-1}$, and
            $\Phi_X(w^i)$ is a subset of this set, with cardinality $2^{i-1}$.
            This shows already that, with probability at least $\frac{1}{5}$,
            $h_{\tau_3} \in \Phi_X(w^i)$.
            If we can now show that we have
            \[ v_{\tau_3 - 1} \cap \Phi_Y(w^i) \neq \emptyset \]
            then the claim will follow
            (because then, with probability at least $\frac{1}{5}$,
            we have $v_{\tau_3} \subseteq {\Phi(w^i)}$ and hence
            $v_{\tau_3} \in W^i$,
            cf.~\cref{Lemma: list}(\ref{Item: constraint containment})).

            By \cref{Lemma: adjacency} we know that $v_{\tau_2}$ and
            $v_{\tau_3 - 1}$ share either their $X$-coordinate or their
            $Y$-coordinate.
            If they share their $Y$-coordinate then we are done, because then
            $h_{\tau_2} \in v_{\tau_3 - 1} \cap \Phi_Y(w^i)$.
            Assume now that they share their $X$-coordinate but not their
            $Y$-coordinate;
            we will now examine the $X$- and $Y$-coordinates of
            $v_{\tau_3 - 1}$ separately, and show that neither of them can be
            an element of the set $\Phi(w^{i+1})$,
            which by lemma \cref{Lemma: list}(\ref{Item: constraint
            containment 2}) yields a contradiction to the fact that we have
            $v_{\tau_3 - 1} \in W^{i+1}$.
            \begin{itemize}
                \item[--]
                    The $Y$-coordinate of $v_{\tau_3-1}$ is given by $h_{\tau_3 -
                    1}$, which by definition of our stopping times is \emph{not} an
                    element of $\Phi(w^{i+1})$.
                \item[--]
                    Let $x$ denote the $X$-coordinate of $v_{\tau_3-1}$.
                    This $x$ is also the $X$-coordinate of $v_{\tau_2}$,
                    and also the one of $v_{\tau_2 - 1}$.
                    Suppose we had $x \in \Phi(w^{i+1})$:
                    Then by definition of our stopping times $x$ cannot be the
                    pivot $h_{\tau_2 - 1}$; hence, $v_{\tau_2 - 1} = \{ x,
                    h_{\tau_2 - 1} \}$ and $h_{\tau_2 - 1} \in Y$.
                    By \cref{Lemma: adjacency} we obtain that $\{ h_{\tau_1 + 1},
                    \dots, h_{\tau_2-1} \}$ is a subset of $Y$, so that as a
                    consequence $v_1$ and $v_{\tau_2 - 1}$ share their $X$-coordinate.
                    But the $X$-coordinate of $v_1$ is $h_{\tau_1}$; thus we
                    have found $x = h_{\tau_1} \in \Phi(w^i)$
                    and $v_{\tau_2} = \{ x, h_{\tau_2} \} \subseteq
                    \Phi(w^i)$.
                    By \cref{Lemma: list}(\ref{Item: constraint containment})
                    we obtain $v_{\tau_2} \in W^i$.
                    So the next milestone has already been hit by the time
                    $\tau_2$: a contradiction.
            \end{itemize}
    \end{enumerate}
\end{proof}

This concludes this section, establishing that the expected time going from
one milestone to the next is bounded by $O(\log n)$ (proposition
\ref{Proposition: next milestone}).
A very similar argumentation can be used to also bound by $O(\log n)$ the expected
time until the initial milestone is hit, yielding proposition
\ref{Proposition: first milestone}.
Theorem \ref{Theorem: upper bound} now follows by observing that there are
$O(\log n)$ many milestones.

%
%

\section{Conclusion}

In this note we have shown that the performance of \textsc{Random-Edge} on
simple $d$-polytopes with $d+2$ facets does not suffer if the improving
directions are specified by an arbitrary unique sink orientation.

The exact performance of \textsc{Random-Edge} on simple $d$-polytopes with
$d+k$ facets, where $k \ge 3$ is considered constant, remains an open problem.
The question is open even for $k=3$ and with the improving directions
specified by a linear objective function.

\section{Appendix}

The lemma and the theorem below, both of elementary nature, are used in
\cref{Section: proof}.

\begin{lemma}
   \label{Lemma: multiplicative bound}
   Let $X_1, X_2, \dots$ be non-negative random variables, let $N$ be a
   random variable with values in $\NN_0 \cup \{\infty\}$,
   and let $X = \sum_{j=1}^N X_j$.
   Assume that there is $M > 0$ such that for all $j$,
           $\E\bigl[ X_j \given N \ge j \bigr] \,\le\, M$.
   Then
   \[
       \E X \,\le\, M \cdot \E N.
   \]
\end{lemma}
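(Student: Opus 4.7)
The plan is to decouple the random upper limit of the sum by introducing indicator variables, then interchange sum and expectation using non-negativity, and finally recognise a tail-sum for $\E N$. Specifically, I would write
\[
    X \,=\, \sum_{j=1}^{N} X_j \,=\, \sum_{j=1}^{\infty} X_j \cdot \one[N \ge j],
\]
which is valid because the $X_j$ are non-negative (so the tail of the series beyond $N$ contributes exactly zero, even when $N = \infty$).

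Next I would apply Tonelli's theorem to swap $\E$ and the infinite sum. Each term becomes
\[
    \E\bigl[ X_j \cdot \one[N \ge j] \bigr]
    \,=\,
    \E\bigl[ X_j \given N \ge j \bigr] \cdot \Pr[N \ge j]
    \,\le\,
    M \cdot \Pr[N \ge j],
\]
where the conditional-expectation step uses the hypothesis of the lemma; on events where $\Pr[N \ge j] = 0$ the conditional expectation is undefined but the product is zero, so the bound still holds.

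Summing the bounds and invoking the standard tail-sum identity for a non-negative integer-valued random variable,
\[
    \sum_{j=1}^{\infty} \Pr[N \ge j] \,=\, \E N,
\]
yields $\E X \le M \cdot \E N$, as desired.

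There is no real obstacle here: the only mildly delicate points are (i) making sure the rewriting as an infinite sum is justified for $N = \infty$ (handled by non-negativity) and (ii) the convention for conditional expectations on null events, both of which are routine. No additional machinery beyond monotone convergence / Tonelli is required.
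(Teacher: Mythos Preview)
Your argument is correct and is essentially the same as the paper's: both rewrite $X$ via the indicators $\one[N \ge j]$, bound $\E[X_j \cdot \one[N \ge j]]$ by $M \cdot \Pr[N \ge j]$ using the hypothesis, and sum via monotone convergence / Tonelli together with the tail-sum identity $\sum_{j\ge 1}\Pr[N\ge j]=\E N$. The only cosmetic difference is that the paper phrases the first step as ``replace $X_j$ by $X_j \cdot \one[j \le N]$'' rather than writing the infinite sum directly.
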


\begin{proof}
    Without loss of generality we can assume that
    $\E\bigl[ X_j \given j > N \bigr]$ vanishes for all $j$.
    (If this is not the case then we can just replace each $X_j$ by the random
    variable that equals $X_j$ when $j \le N$, and equals $0$ otherwise.)
    Now we can write the expectation of each individual variable $X_j$ as
    \[
        \E X_j \,=\, \underbrace{ \E\bigl[ X_j \given N \ge j \bigr] }_{\le M} \,\Pr\bigl[ N \ge j \bigr]
        \,+\, \underbrace{ \E\bigl[ X_j \given N < j \bigr] }_{=0} \,\Pr \bigl[ N < j \bigr]
        \,\le\, M \Pr\bigl[ N \ge j \bigr]
    \]
    and conclude using monotone convergence:
    \begin{align*}
        \E X ~=~      \sum_{j \ge 1} \E X_j
             ~\le~    \sum_{j \ge 1} M \cdot \Pr\bigl[ N \ge j \bigr]
              ~=~      M \cdot \E N .
    \end{align*}
\end{proof}

       \begin{theorem}
          \label{Theorem: Bernoulli process}
          For $p \in (0,1)$ and $n \in \NN_0$, let $\tau(p,n)$ denote the first time
          that, during a Bernoulli process with parameter $p$, one has encountered
          $n$ \emph{successive} successes.
          Its expectation is given by
          \[
              \E \bigl[ \tau(p,n) \bigr] \,=\, \frac{1 - p^n}{p^n (1-p)}.
          \]
       \end{theorem}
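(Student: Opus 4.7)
The plan is to prove the formula by induction on $n$, via a first-step analysis applied at the stopping time $\tau(p, n-1)$. Write $T_n := \E[\tau(p, n)]$. The base case $n = 0$ is immediate: zero successive successes are trivially observed at time $0$, so $\tau(p, 0) = 0$ almost surely, and the right-hand side also evaluates to $0$.

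For the inductive step, I would condition on what happens just after time $\tau(p, n-1)$, when $n - 1$ successive successes have just been observed. The next Bernoulli trial is independent of the past. With probability $p$ it succeeds, completing the run of $n$ in one additional step; with probability $1-p$ it fails, and by independence the remaining wait is then distributed as a fresh copy of $\tau(p, n)$. Taking expectations yields
\[
    T_n \,=\, T_{n-1} \,+\, 1 \,+\, (1 - p)\, T_n,
\]
which rearranges to $T_n = (T_{n-1} + 1)/p$. Plugging in the inductive hypothesis $T_{n-1} = (1 - p^{n-1})/(p^{n-1}(1 - p))$ and simplifying (the numerator becomes $1 - p^{n-1} + p^{n-1}(1-p) = 1 - p^n$) gives $T_n = (1 - p^n)/(p^n(1 - p))$, completing the induction.

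There is no real obstacle. The only point worth care is the first-step decomposition at $\tau(p, n-1)$; it is valid because the Bernoulli trials are i.i.d., so the strong Markov property at this stopping time holds trivially. Alternatively, one could solve the recursion $T_n = (T_{n-1} + 1)/p$ directly with $T_0 = 0$, obtaining the telescoping sum $T_n = \sum_{k=1}^{n} p^{-k}$, which evaluates via the geometric series formula to the same closed form.
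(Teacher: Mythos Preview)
Your proof is correct and follows essentially the same route as the paper: induction on $n$ via a first-step analysis after time $\tau(p,n-1)$, leading to the recursion $T_n = (T_{n-1}+1)/p$, which is then solved with the inductive hypothesis. The paper writes the recursion as $t_n = t_{n-1} + p + (1-p)(1+t_n)$, which simplifies to your $T_n = T_{n-1} + 1 + (1-p)T_n$; apart from this cosmetic difference and your added remark about the telescoping alternative, the arguments coincide.
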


\begin{proof}
    By induction on $n$.
    The equality clearly holds for $n=0$.
    For $n \ge 1$ we abbreviate the left-hand side by $t_n$ and find the
    recursion
    \[
        t_n \,=\, t_{n-1} + p + (1-p)(1 + t_n),
    \]
    which gives
    \begin{align*}
        t_n \,&=\,
        \frac{1 + t_{n-1}}{p}
        \,=\,
        \frac{1}{p} \left(
            1 + \frac{1-p^{n-1}}{p^{n-1}(1-p)}
        \right)
        \,=\,
        \frac{1-p^n}{p^n(1-p)}.
    \end{align*}
\end{proof}

\bibliographystyle{plainnat}
\bibliography{arxivpaper}

\begin{thebibliography}{7}
\providecommand{\natexlab}[1]{#1}
\providecommand{\url}[1]{\texttt{#1}}
\expandafter\ifx\csname urlstyle\endcsname\relax
  \providecommand{\doi}[1]{doi: #1}\else
  \providecommand{\doi}{doi: \begingroup \urlstyle{rm}\Url}\fi

\bibitem[Felsner et~al.(2005)Felsner, G\"{a}rtner, and
  Tschirschnitz]{FelsnerGT'05}
Stefan Felsner, Bernd G\"{a}rtner, and Falk Tschirschnitz.
\newblock Grid orientations, $(d,d + 2)$-polytopes, and arrangements of
  pseudolines.
\newblock \emph{Discrete \& Comput. Geom.}, 34\penalty0 (3):\penalty0 411--437,
  2005.

\bibitem[Friedmann et~al.(2011)Friedmann, Hansen, and Zwick]{FriedmannHZ'11}
Oliver Friedmann, Thomas~Dueholm Hansen, and Uri Zwick.
\newblock Subexponential lower bounds for randomized pivoting rules for the
  simplex algorithm.
\newblock In \emph{Proceedings of the 43rd {ACM} Symposium on Theory of
  Computing (STOC)}, pages 283--292, 2011.

\bibitem[G{\"a}rtner and Schurr(2006)]{GaertnerS'06}
Bernd G{\"a}rtner and Ingo Schurr.
\newblock Linear programming and unique sink orientations.
\newblock In \emph{Proc. 17th Annual ACM-SIAM Symposium on Discrete Algorithms
  (SODA)}, pages 749--757, 2006.

\bibitem[G{\"{a}}rtner et~al.(2001)G{\"{a}}rtner, Solymosi, Tschirschnitz,
  Valtr, and Welzl]{GaertnerSTVW'01}
Bernd G{\"{a}}rtner, J{\'{o}}zsef Solymosi, Falk Tschirschnitz, Pavel Valtr,
  and Emo Welzl.
\newblock One line and $n$ points.
\newblock In Jeffrey~Scott Vitter, Paul~G. Spirakis, and Mihalis Yannakakis,
  editors, \emph{Proceedings on 33rd Annual {ACM} Symposium on Theory of
  Computing, July 6-8, 2001, Heraklion, Crete, Greece}, pages 306--315. {ACM},
  2001.

\bibitem[G{\"a}rtner et~al.(2008)G{\"a}rtner, Morris{, Jr.}, and
  R{\"u}st]{GaertnerMR'08}
Bernd G{\"a}rtner, Walter~D. Morris{, Jr.}, and Leo R{\"u}st.
\newblock Unique sink orientations of grids.
\newblock \emph{Algorithmica}, 51:\penalty0 200--235, 2008.

\bibitem[Sza\-b{\'{o}} and Welzl(2001)]{SzaboW'01}
Tibor Sza\-b{\'{o}} and Emo Welzl.
\newblock Unique sink orientations of cubes.
\newblock In \emph{Proc. of {FOCS} 2001}, pages 547--555, 2001.

\bibitem[Tschirschnitz(2003)]{Tschirschnitz'03}
Falk Tschirschnitz.
\newblock \emph{LP-related Properties of Polytopes with Few Facets}.
\newblock PhD thesis, ETH Z\"urich, 2003.

\end{thebibliography}

\end{document}